\newtheorem{theorem}{Theorem}
\newtheorem{lemma}{Lemma}
\theoremstyle{definition}
\newtheorem{definition}{Definition}
\theoremstyle{remark}
\title{\LARGE \bf
On Determining and Qualifying the Number of Superstates in Aggregation of Markov Chains}
\author{Amber Srivastava, Raj K. Velicheti, and Srinivasa M. Salapaka
\thanks{This work was supported by  NSF grant ECCS (NRI) 18-30639. The authors are with the Mechanical Science and Engineering Department and Coordinated Science Laboratory, University of Illinois at Urbana-Champaign, IL, 61801 USA. E-mail: \{asrvstv6, rkv4, salapaka\}@illinois.edu. Paper is under consideration at Pattern Recognition Letters}
}
\begin{document}
\maketitle
\begin{abstract}
Many studies involving large Markov chains require determining a smaller representative (aggregated) chains. Each {\em superstate} in the representative chain represents a {\em group of related} states in the original Markov chain. Typically, the choice of number of superstates in the aggregated chain is ambiguous, and based on the limited prior know-how. In this paper we present a structured methodology of determining the best candidate for the number of superstates. We achieve this by comparing aggregated chains of different sizes. To facilitate this comparison we develop and quantify a notion of {\em marginal return}. Our notion captures the decrease in the {\em heterogeneity} within the group of the {\em related} states (i.e., states represented by the same superstate) upon a unit increase in the number of superstates in the aggregated chain. We use Maximum Entropy Principle to justify the notion of marginal return, as well as our quantification of heterogeneity. Through simulations on synthetic Markov chains, where the number of superstates are known apriori, we show that the aggregated chain with the largest marginal return identifies this number. In case of Markov chains that model real-life scenarios we show that the aggregated model with the largest marginal return identifies an inherent structure unique to the scenario being modelled; thus, substantiating on the efficacy of our proposed methodology.
\end{abstract}
\section{Introduction}\label{sec: Introduction}
Markov chains provide a mathematical model to study many real-world stochastic processes, such as population dynamics, cruise control, transportation systems, and queueing networks \cite{gagniuc2017markov}. However, Markov chain models for several complex systems such as applications originating in network analysis \cite{srikant2004mathematics}, neuroscience \cite{quinn2011estimating}, and economics \cite{zhang2004nearly} require large number of states where analyzing them is challenging and inefficient; thus, laying foundation for several Markov chain aggregation techniques.

The two-fold objective of the aggregation problem is to (a) group {\em similar} states in the Markov chain and represent them as a single {\em superstate} in the aggregated model, and (b) determine the state transition probability matrix of the aggregated model. Prior works in literature \cite{aldhaheri1991aggregation,rached2004kullback,beck2009model,vidyasagar2012metric,xu2013distance,deng2011optimal} aim at developing an appropriate {\em dissimilarity} metric to compare the differently-sized state transition probability matrices of the Markov chain and its aggregated model. The algorithms proposed in \cite{xu2014aggregation,geiger2014markov,deng2011optimal} determine the aggregated models at a {\em pre-specified} number of superstates such that the dissimilarity is minimized. One of the recent works \cite{sledge2019information} considers the aggregation of the class of nearly completely decomposable (NCD) \cite{ando1963near} Markov chains, and provides an algorithm-based method to estimate the number of superstates underlying the above class of chains. There is scant literature that provides algorithm-free methodology on the choice of number of superstates for general Markov chain. Much of this is owing to the unavailability of a method to quantitatively compare the aggregated models of different sizes that describe the same Markov chain. In this article, we provide a procedure to compare these aggregated models. We demonstrate its utility in determining an appropriate choice for the number of superstates among the {\em given} aggregated models. More precisely, we develop and quantify a notion of {\em marginal return} that compares the aggregated models of different sizes. We demonstrate via simulations that the aggregated model with the largest marginal return estimates the superstates {\em underlying} the original Markov chain. The corresponding size of this aggregated model provides an appropriate choice for the number of superstates.

\begin{figure}
    \centering
    \includegraphics[width=0.85\columnwidth]{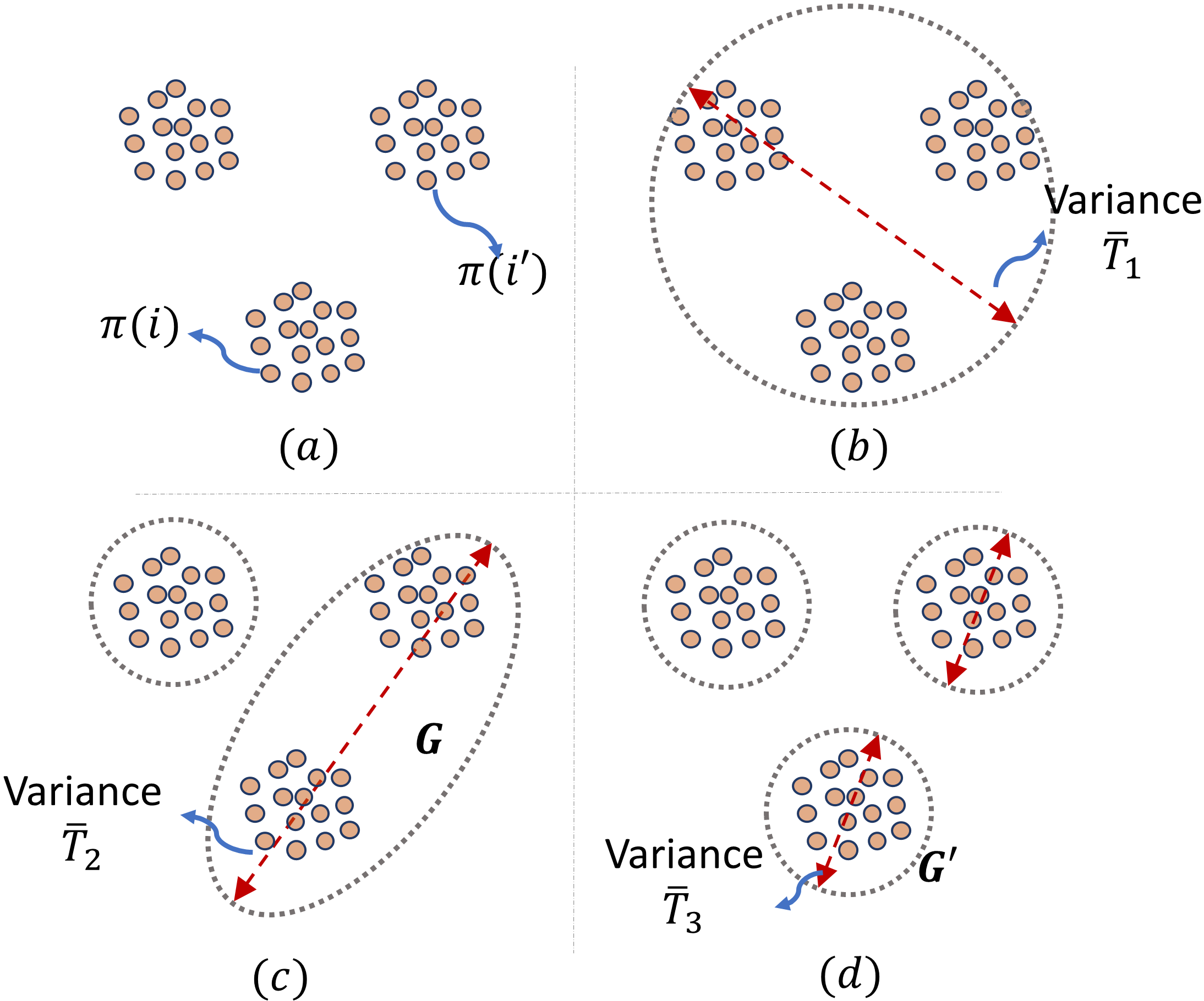}
    \caption{(a) Represents the states in the original Markov chain. (b) Aggregated model with 1 superstates, i.e. all states in the same group. (c) Illustrates the 2 superstates, i.e., 2 groups of related states. (c) Illustrates the 3 distinct groups of related states.}\label{fig: IntroFig}\vspace{-0.4cm}
\end{figure}

The primary goal in the aggregation of Markov chain is akin to the  data-clustering problem. The former identifies groups of {\em similar} states, whereas the latter determines the groups (or, clusters) of {\em similar} data points. Thus, one can view the problem of determining the appropriate number of superstates to be analogous to estimating the {\em true} (or, natural) number of clusters in a dataset. The latter is a well-studied  problem in the data clustering literature \cite{tibshirani2001estimating,feng2007pg,pelleg2000x,sugar2003finding,hamerly2004learning,kalogeratos2012dip,srivastava2019persistence}. However, the methods proposed in the context of data clustering are not directly applicable to the aggregation of Markov chains. This is owing to the inherent differences in the two problems. The data clustering problem identifies groups (or, clusters) of similar data points lying in the euclidean space where the cost function is typically 2-norm. On the other hand, the Markov chain aggregation requires determining the groups of similar states whose the transition probabilities lie in the probability space, and the cost is typically in terms of the relative entropy of the transition probabilities. Nonetheless, the methods proposed in the context of data clustering can still be helpful in devising {\em procedures} to compare aggregated chains of different sizes. These procedures can then be used to estimate the appropriate number of superstates. In fact, our paper builds on the prior work done in \cite{srivastava2019persistence} that develops and quantifies a notion of {\em persistence} of a clustering solution, and utilizes it to estimate the natural number of clusters in the dataset. Below we provide a qualitative and quantitative descriptions of our notions of marginal return and heterogeneity.

{\bf Qualitative description of marginal return: } As stated above we exploit the work done in \cite{srivastava2019persistence} to qualitatively elucidate our notion of marginal return and the heterogeneity of a superstate. We first give a brief description of the idea presented in \cite{srivastava2019persistence}. Thereafter, we abstract it to the aggregation of Markov chains. Consider the dataset $\mathcal{X}=\{x_i\}$ illustrated in the Figure \ref{fig: IntroFig}(a). It is apparent from the figure that the dataset comprises of $3$ groups of similar datapoints, i.e., 3 {\em natural} clusters. Figure \ref{fig: IntroFig}(b), \ref{fig: IntroFig}(c), and \ref{fig: IntroFig}(d) illustrate the corresponding clustering solutions obtained with $1$, $2$, and $3$ number of clusters, respectively. $\bar{T}_1$, $\bar{T}_2$, and $\bar{T}_3$ denote the {\em variance} (or, spread of data points) within the largest cluster in the respective clustering solutions. 

Note that there is a significant reduction in the variance of the largest cluster from $\bar{T}_2$ in Figure \ref{fig: IntroFig}(c) (with 2 clusters) to $\bar{T}_3$ in Figure \ref{fig: IntroFig}(d) (with 3 clusters). Contrastingly, the variance $\bar{T}_1$ in Figure \ref{fig: IntroFig}(b) (with 1 cluster) is comparable to the variance $\bar{T}_2$ in Figure \ref{fig: IntroFig}(c) (with 2 clusters). The significant reduction in the variance from $\bar{T}_2$ to $\bar{T}_3$ coincides with the progression from the clustering solution with 2 clusters (in Figure \ref{fig: IntroFig}(c)) to the clustering solution with 3 clusters (in Figure \ref{fig: IntroFig}(d)) - where $3$ is the {\em natural} number of clusters in Figure \ref{fig: IntroFig}(a). The work done in \cite{srivastava2019persistence} exploits the above observation. It proposes that the clustering solution with {\em true} number of clusters exhibits {\em a significant drop in the variance of the largest cluster} when compared to the clustering solution with one less cluster. \cite{srivastava2019persistence} refers this to as the clustering solution with true number of clusters being more {\em persistent} than other clustering solutions. The simulations in \cite{srivastava2019persistence} demonstrate that the above idea significantly outperforms popular benchmark methods \cite{tibshirani2001estimating,feng2007pg,pelleg2000x,sugar2003finding,hamerly2004learning,kalogeratos2012dip} in literature when applied to diverse synthetic and standard datasets.

We now extend the above idea to the aggregation of Markov chains. Analogous to the {\em variance} of a cluster in \cite{srivastava2019persistence} we establish the {\em heterogeneity} of a superstate in the context of Markov chains. Subsequently, we define the {\em marginal return} of an aggregated model as the {\em reduction in the heterogeneity} of the largest superstate in comparison to the aggregated model with one less superstate. We illustrate the idea further as follows. Consider a set of $K$ aggregated models each with different number $k\in\{1,\hdots,K\}$ of superstates. Analogous to the above case of data clustering we propose that {\em the aggregated model with the largest marginal return estimates the true number $k_t$ of superstates}. The idea is that there is a significant drop in the heterogeneity of the largest superstate in the aggregated models $\bar{M}_{k_t-1}$ with $k_t-1$ superstate in comparison to the aggregated model $\bar{M}_{k_t}$ with $k_t$ superstates. This could be owing to several reasons. For instance, the largest superstate in $\bar{M}_{k_t-1}$ combines {\em distinct} groups of {\em similar} states in the original Markov chain resulting into much larger heterogeneity, whereas the aggregated model $\bar{M}_{k_t}$ possibly identifies each group of {\em similar} states distinctly thus exhibiting much lesser heterogeneity. As in the case of data-clustering, the simulations on multiple synthetic and real-world Markov chains demonstrate the efficacy of the above proposed methodology. 

{\bf Quantifying heterogeneity and marginal return: } The quantification of persistence of clustering solutions and the subsequent characterization of natural number of clusters in \cite{srivastava2019persistence} is motivated from the Deterministic Annealing (DA) algorithm presented in \cite{rose1998deterministic}. Deterministic Annealing (DA) is a Maximum Entropy Principle \cite{jaynes2003probability} based clustering algorithm. It operates by determining a single cluster at the beginning of the algorithm. As the annealing proceeds the number of distinct clusters increases at specific instances referred to as {\em phase transitions} \cite{rose1998deterministic}. The notion of persistence and the subsequent characterization of natural clusters in \cite{srivastava2019persistence} are motivated by the analytical conditions determining phase transitions in DA. 

The work done in \cite{xu2014aggregation} presents a deterministic annealing (DA) approach to the aggregation of Markov chains. Here, the problem formulation and the associated cost functions are fundamentally distinct from the data clustering problem in \cite{rose1998deterministic}; however, the aggregation algorithm presented in \cite{xu2014aggregation} undergoes similar phase transition phenomenon. In particular, the algorithm begins with determining an aggregated model with a single superstate. As the annealing proceeds, the number of superstates in the model increase at phase transitions. In this paper we explicitly determine the analytical condition on the phase transition phenomenon occurring in \cite{xu2014aggregation}. Thereafter, analogous to the above case of clustering we quantify our notion of marginal return and the heterogeneity of a superstate based on these analytical conditions. Further, our characterization of the best choice for the number of superstates in the aggregated model is also motivated from the phase transition phenomenon. We substantiate on the exact expressions of heterogeneity and marginal return in the later sections.

Our simulations demonstrate the efficacy of marginal return in estimating the true number $k_t$ of superstates. We observe that marginal return value $\nu(k_t)$ is as high as $50$ to $400$ times the second largest value for Markov chains where the true number $k_t$ is discernible from the heatmap of the associated transition matrices $\Pi$. For Markov chains where true number $k_t$ is not discernible from the heatmaps, i.e. the underlying true number $k_t$ of superstates is not visually apparent, the persistence value $\nu(k_t)$ is still the largest and as high as $1.2$ to $5$ times the second largest value. In our simulations on Markov chains that model real-life scenarios - (a) emotion transitions in brain networks, and (b) letter bigram dataset, we observe that marginal return provides meaningful insights. In particular, it captures the distinct types of emotions (positive and negative) in the former, and the different types of English alphabets (vowels and consonants) in latter. 

This paper is organized as follows. Section \ref{sec: Agg_MarkovChain} summarizes the MEP-based aggregation framework in \cite{xu2014aggregation}, and derives analytical conditions for phase transition phenomenon. Section \ref{sec: MainResults} presents the main definitions, and the results of the paper. In Section \ref{sec: Simulations} we present simulations on synthetic and real-world Markov chains. Section \ref{sec: AnalysisDisc} provides a general analysis of our proposed method.

\section{Aggregation of a Markov Chain}\label{sec: Agg_MarkovChain}
In this section we briefly illustrate the Markov chain aggregation framework adopted in \cite{xu2014aggregation}. As stated above this framework forms the foundation for our quantification of heterogeneity of a superstate, and the marginal return of an aggregated model. Consider a Markov chain $(X,\Pi)$ with state space $X=\{x_i:1\leq i \leq N\}$, and the transition probability matrix $\Pi = (\pi_{ij})\in\mathbb{R}^{N\times N}$. The objective is to determine an aggregated representative chain $(Y,\Psi,\Phi)$ with $M$ $\ll N$ (super-) states, where $Y=\{y_j:1\leq j\leq M\}$ denotes the state space, $\Psi=(\psi_{jk})\in\mathbb{R}^{M\times M}$ denotes the transition probability matrix, and $\Phi:X\rightarrow Y$ is the associated {\em partition} function such that the state $x_i\in X$ is represented by the superstate $\Phi(x_i)\in Y$.

The framework associates a distribution vector $z(j)=(z_{j1},\hdots,z_{jN})\in\mathbb{R}^N$ to each superstate $y_j\in Y$. The distribution vector $z(j)$ captures the relation of the superstate $y_j$ to all the $N$ states in $X$. In fact, we can interpret $z_{jk}$ as the probability of transition from the superstate $y_j$ to the state $x_k$. Please refer to Figure \ref{fig: FrameworkXu} for a graphical interpretation of $z(j)$. Without loss of generality $\sum_{k=1}^Nz_{jk}=1$ and $z_{jk}\geq 0$ for all $k$. 

In the original Markov chain $(X,\Pi)$ the transition probability vector $\pi(i)=(\pi_{i1},\hdots,\pi_{iN})$ captures the relation of the state $x_i$ to all the $N$ states in $X$. Thus, the distance $d(x_i,y_j)$ between the state $x_i$ and the superstate $y_j$ is given by the {\em relative entropy} between the transition probability vector $\pi(i)$ and the distribution vector $z(j)$, that is, 
\begin{align}\label{eq: distMetric}
\text{\small$d(x_i,y_j) = \sum_{k=1}^N \pi_{ik}\log\frac{\pi_{ik}}{z_{jk}}~~ \forall~~1\leq i\leq N, 1\leq j\leq M.$}
\end{align}
\begin{figure}
    \vspace{1em}
    \centering
    \includegraphics[width=0.64\columnwidth]{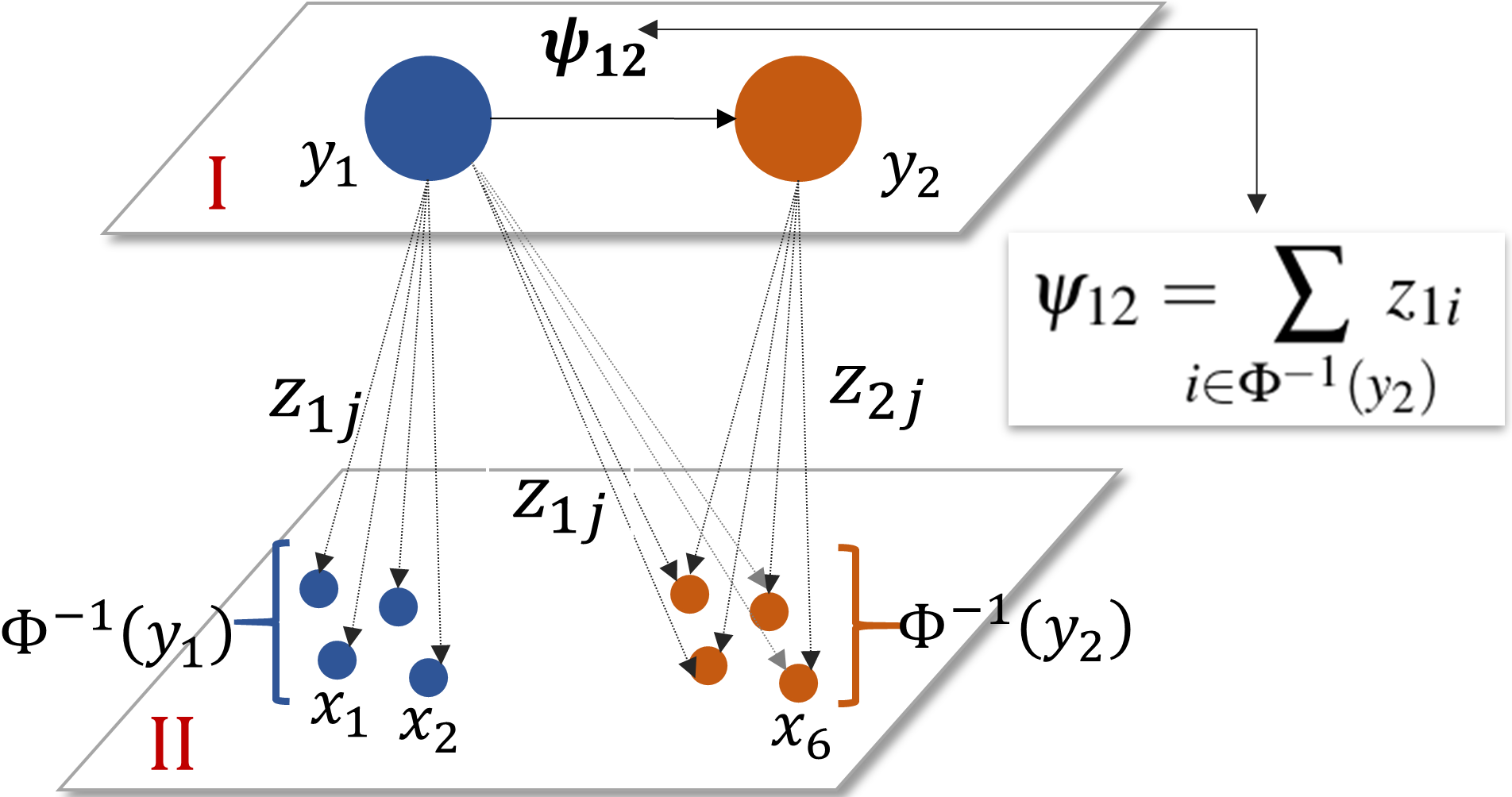}
    \caption{The plane $\mathrm{I}$ represents the aggregated model for the Markov chain in the plane $\mathrm{II}$. Each superstate $y_j$ in $\mathrm{I}$ is associated to each state $x_i$ in $\mathrm{I}\mathrm{I}$ via the weight $z_{ji}$. The set of states represented by superstate $y_j$ is given by $\Phi^{-1}(y_j)$.}\label{fig: FrameworkXu}\vspace{-1em}
\end{figure}

Thereafter, \cite{xu2014aggregation} determines a set of $M$ distribution vectors $\{z(j)\}_{j=1}^M$ such that the cumulative distance of each state $x_i$ to its closest superstate $y_j:=\Phi(x_i)$ (i.e., the superstate that represents $x_i$) is minimized. In particular, \cite{xu2014aggregation} addresses the following optimization problem
\begin{align}\label{eq: Distortion}
\begin{split}
\text{\small$\min_{Z,\Phi}$}&\text{\small$\quad D(\Pi,Z,\Phi) = \sum_{i=1}^N \rho_i~ d(x_i,\Phi(x_i))$}\\
\text{\small subject to}&\text{\small$\quad z_{jk}\geq 0~\forall~j,k~\text{and}~ z(j)^{\top}\mathbf{1}_N = 1~\forall ~j,$}
\end{split}
\end{align}
where $\rho_i$ denotes a given relative weight of the state $x_i$. Subsequently, \cite{xu2014aggregation} determines the transition matrix $\Psi=(\psi_{jk})$ of the aggregated chain in terms of the distribution vectors $\{z(j)\}_{j=1}^M$. In particular, the transition probability $\psi_{jk}$ from the superstate $y_j$ to $y_k$ is the cumulative sum of the weights $z_{ji}$'s from superstate $y_j$ to all states $x_i$ that are represented by the superstate $y_k$ (see Figure \ref{fig: FrameworkXu}), i.e.,
\begin{align}
\text{\small$\psi_{jk}=\sum_{\scriptstyle x_i\in\Phi^{-1}(y_k)}z_{ji}~\forall~1\leq j,k\leq M.$}
\end{align}

The partition function $\Phi:X\rightarrow Y$ in (\ref{eq: Distortion}) associates the state $x_i$ to a {\em particular} superstate $y_j=\Phi(x_i)\in Y$. In its solution methodology, \cite{xu2014aggregation} replaces this partition function by the {\em soft} partition weights $p_{j|i}\in[0,1]$ that determine the association of the state $x_i$ to different superstates $\{y_j\in Y\}$. Without loss of generality $\sum_{j}p_{j|i}=1$ and $p_{j|i}\geq 0$ for all $i$, $j$. Thereafter, Maximum Entropy Principle (MEP) is utilized to design the partition weights $\{p_{j|i}\}$. In particular, \cite{xu2014aggregation} determines the most {\em unbiased} partition weights $\{p_{j|i}\}$ and the distribution vectors $\{z(j)\}$ that maximize the corresponding Shannon entropy $H$ such that the expected cost function $\mathbb{E}_p[D]$ attains a pre-specified value $d_0$, i.e, the framework solves the following optimization problem
\begin{align}
\begin{split}
\text{\small$\max_{\{p_{j|i}\},\{z(j)\}}$}&\quad \text{\small$H = -\sum_{i=1}^N \rho_i \sum_{j=1}^M p_{j|i}\log p_{j|i}$}\label{eq: MEP_Optim}\\
\text{\small subject to}&\quad\text{\small$\mathbb{E}_p[D]:= \sum_{i=1}^N \rho_i \sum_{j=1}^M p_{j|i} d(x_i,y_j)= d_0$}\\
&\quad \text{\small$z_{jk}\geq 0,~z(j)\mathbf{1}_N=1~\forall j,k.$}
\end{split}
\end{align}
Minimizing (local) the Lagrangian $\mathcal{L}=(\mathbb{E}_p[D]-d_0)-TH$ (where $T$ is the Lagrange parameter) of above optimization problem with respect to $\{p_{j|i}\}$ and $\{z(j)\}$ results into 
\begin{align}
& p_{j|i} = \frac{\exp\{-(1/T) d(x_i,y_j)\}}{\sum_k \exp\{-(1/T) d(x_i,y_k)\}},\quad Z = P^{t}\Pi\label{eq: Soft_Association}\\
&\text{where } [P]_{ij} = \frac{\rho_i p_{j|i}}{\sum_t \rho_t p_{j|t}} \text{ and  }Z=[z(1),\hdots,z(M)]^{\top}\label{eq: RepDist_Association}.
\end{align}
The resulting Lagrangian $\mathcal{L}(Z)$ is given by
\begin{align}\label{eq: ApproxDistortion}
\text{\small$\mathcal{L}(Z) = -T\sum_{i=1}^N \rho_i \log\sum_{j=1}^M \exp\Big\{-\frac{1}{T} \sum_{k=1}^N \pi_{ik}\log \frac{\pi_{ik}}{z_{jk}}\Big\}$}.
\end{align}
It is known from the sensitivity analysis \cite{jaynes2003probability} that the large (small) values of $d_0$ in (\ref{eq: MEP_Optim}) is analogous to large (small) values of the Lagrange parameter $T$. In fact, the optimization problem in (\ref{eq: MEP_Optim}) is repeatedly solved at decreasing values of $T$ (or equivalently, decreasing values of $d_0$). At large values of $T\rightarrow \infty$, the Lagrangian $\mathcal{L}$ is dominated by the convex function $-H$, the partition weights $\{p_{j|i}\}$ in (\ref{eq: Soft_Association}) are uniformly distributed ($p_{j|i}=1/M$), and all the distributions $\{z(j)\}$ in (\ref{eq: Soft_Association}) are {\em co-incident}, i.e., effectively {\em one} distinct superstate is obtained at large values of $T$. As $T$ decreases further the Lagrangian is more and more dominated by the cost function $\mathbb{E}_p[D]$, the partition weights $\{p_{j|i}\}$ are no longer uniform, and the coincident distributions $\{z(j)\}$ split into distinct groups, i.e., the effective number of distinct superstates increases. 

In this article we exploit this heirarchical splitting to define the notions of heterogeneity and the marginal return. Insights from the simulations of the algorithm in \cite{xu2014aggregation} demonstrate that there exist certain critical temperature $T=T_{\text{cr}}$ values at which the solution undergoes the phenomenon of {\em phase transition}. This phenomenon is characterized by the increase in the number of distinct distributions in $\{z(j)\}$; or, equivalently increase in the number of distinct superstates. These critical temperatures $T_{\text{cr}}$'s occur when the critical point $Z^*$ of the Lagrangian $\mathcal{L}$, given by $\frac{\partial \mathcal{L}(Z^*)}{\partial Z}=0$ is no longer the local minima, that is, when for at least one perturbation direction  $\Psi\in\mathbb{R}^{M\times N}$, the Hessian $\mathcal{H}(Z^*,P^*,\Psi,T):=$
\begin{align}\label{eq: Hessian}
\text{\small$\lim_{\epsilon\rightarrow 0}\frac{\partial^2 \mathcal{L}(Z^*+\epsilon\Psi)}{\partial \epsilon^2}\hspace{0ex}$=}&\text{\small$\sum_{j=1}^M q_j \psi_j^{\top} (\Lambda_{T}(j) - \frac{1}{T} C_{T}(j))\psi_j$} \nonumber\\ \text{\small$+$}&\text{\small$T\sum_{i=1}^N\Big(\sum_{j=1}^M p_{j|i}\Big[\frac{\pi(i).}{z^*(j)}\Big]^{\top}\psi_j\Big)^2,$}
\end{align}
is no longer positive. Here $q_j = \sum_{i=1}^N\rho_ip_{j|i}$, $\Psi=[\psi_1,\hdots,\psi_M]^{\top}\in\mathbb{R}^{M\times N}$ is perturbation direction,
\begin{align}\label{eq: imp_matrices}
\begin{split}
\text{\small$\Lambda_{T}(j)$}&\text{\small$= \text{diag}\Bigg\{\frac{\sum_{i=1}^Np_{i|j}\pi(i).}{z^*(j).^2}\Bigg\},$}\\
\text{\small$C_{T}(j)$}&\text{\small$= \sum_{i=1}^Np_{i|j}\Big[\frac{\pi(i)-z^*(j).}{z^*(j)}\Big]\Big[\frac{\pi(i)-z^*(j).}{z^*(j)}\Big]^{\top},$}
\end{split}
\end{align}
and $p_{i|j}=(\rho_ip_{j|i}/\sum_t \rho_tp_{j|t})$ is the posterior distribution. We derive the expression for critical temperature $T_{cr}$ as below.

\begin{theorem}
The value of critical temperature $T_{cr}$ at which the the Hessian $\mathcal{H}(Z^*,P^*,\Psi,T)$ in (\ref{eq: Hessian}) is no longer positive for some perturbation direction $\Psi$, and $Z^*$ in (\ref{eq: Soft_Association}) undergoes phase transition is given by
\begin{align}
&\text{\small$T_{\text{cr}}$} := \text{\small$\max_{1\leq j\leq M}\Big[T_{cr,j}\Big]$},\text{ where }  \text{\small$T_{cr,j}=\lambda_{\max}\big(\mathcal{C}_T(j)\big)$, and}\label{eq: cric_beta}\\
&\text{\small $\mathcal{C}_T(j)$} = \text{\small$\sum_{i=1}^N[P]_{ij}\Big[\Theta^{\top}\frac{\pi(i)-z^*(j).}{z^*(j)}\Big]\Big[\Theta^{\top}\frac{\pi(i)-z^*(j).}{z^*(j)}\Big]^{\top}$,}\label{eq: cric_beta2}
\end{align}
$\lambda_{\max}(\cdot)$ is the largest eigenvalue, $\frac{a.}{b}$ denotes element-wise division of vectors $a$ and $b$, and  $\Theta\in\mathbb{R}^{N\times N-1}$ corresponds to the constraint $z(j)^{\top}\mathbf{1}_N=1$ $\forall$ $1\leq j\leq M$ in (\ref{eq: Distortion}) and (\ref{eq: MEP_Optim}). See \ref{app: First} for further details.
\end{theorem}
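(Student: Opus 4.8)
The plan is to treat the phase-transition threshold purely as a second-variation problem. Since $Z^*$ in (\ref{eq: Soft_Association}) is a critical point, $\partial\mathcal{L}/\partial Z=0$, the loss of local optimality is signalled exactly when the quadratic form $\mathcal{H}$ in (\ref{eq: Hessian}) first fails to be positive as $T$ decreases. I would first enforce the normalization $z(j)^\top\mathbf{1}_N=1$ at the level of perturbations: admissible directions obey $\psi_j^\top\mathbf{1}_N=0$, so I write $\psi_j=\Theta\phi_j$ with $\mathbf{1}_N^\top\Theta=0$ and free $\phi_j\in\mathbb{R}^{N-1}$. Substituting into (\ref{eq: Hessian}) exhibits $\mathcal{H}$ as a \emph{decoupled per-superstate} part $\sum_j q_j\,\phi_j^\top\Theta^\top(\Lambda_T(j)-\tfrac1T C_T(j))\Theta\,\phi_j$ plus the sum-of-squares \emph{coupling} part, which is manifestly $\ge 0$ since $T>0$.

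The key observation is how to annihilate the nonnegative coupling part. At any critical point the vectors $\{z^*(j)\}$ organize into coincident groups, and inside a group $G$ the distances $d(x_i,y_j)$ — hence the weights $p_{j|i}$ of (\ref{eq: Soft_Association}) — are common across $j\in G$. I would therefore restrict to a \emph{balanced in-group split}: $\psi_j=0$ for $j\notin G$ and $\sum_{j\in G}\psi_j=0$. Then the coupling factor $\sum_{j\in G}p_{j|i}[\tfrac{\pi(i).}{z^*(j)}]^\top\psi_j$ reduces to $[\tfrac{\pi(i).}{z^*}]^\top\sum_{j\in G}p_{j|i}\psi_j$ (a common $z^*$ and $p_{j|i}$ across $j\in G$), and since the scalar $p_{j|i}$ factors out it vanishes for every $i$. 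Along these directions $\mathcal{H}$ collapses, through the positive $q_j$, to the single matrix $\Theta^\top(\Lambda_T(j)-\tfrac1T C_T(j))\Theta$, so $\mathcal{H}$ can be made non-positive iff that matrix is not positive definite; a two-element split $\phi_{j_1}=v,\ \phi_{j_2}=-v$ realizes any eigendirection $v$ while respecting $\sum_{j\in G}\psi_j=0$.

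Next I would evaluate (\ref{eq: imp_matrices}) at the critical point. Using $z^*(j)=\sum_i p_{i|j}\pi(i)$, the diagonal entries of $\Lambda_T(j)$ collapse to $1/z^*_{jk}$, i.e.\ $\Lambda_T(j)=\mathrm{diag}\{1/z^*_{jk}\}$, which is precisely the Fisher metric on the simplex and is positive definite. I then take $\Theta$ — the object encoding the constraint in (\ref{eq: cric_beta2}) — to be a $\Lambda_T(j)$-orthonormal basis of $\{\mathbf{1}_N\}^\perp$, so that $\Theta^\top\Lambda_T(j)\Theta=I$ while $\Theta^\top C_T(j)\Theta=\mathcal{C}_T(j)$ is exactly (\ref{eq: cric_beta2}). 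The stability matrix becomes $I-\tfrac1T\mathcal{C}_T(j)$, positive definite iff $T>\lambda_{\max}(\mathcal{C}_T(j))$. Because the coupling part only adds a nonnegative quantity, for $T$ above this value the full Hessian stays positive, a null direction first appears at $T=\lambda_{\max}(\mathcal{C}_T(j))$, and taking the largest such crossing over all groups/superstates yields $T_{\text{cr}}=\max_{j}\lambda_{\max}(\mathcal{C}_T(j))$, as in (\ref{eq: cric_beta}). En route I would also check that the rank-one $\mathbf{1}_N\mathbf{1}_N^\top$ term produced when the vector $\pi(i).{}/z^*(j)$ is recentered as $(\pi(i)-z^*(j).)/z^*(j)$ drops out on the tangent space, which is what turns the raw second derivative into the $C_T(j)$ of (\ref{eq: imp_matrices}).

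The main obstacle is establishing that the balanced in-group direction is genuinely the \emph{first} destabilizing one — that no competing perturbation, one mixing distinct coincident groups or one with $\sum_{j\in G}\psi_j\neq 0$ that reactivates the coupling term, can drive $\mathcal{H}$ non-positive at a \emph{higher} temperature. I expect to settle this from the nonnegativity of the coupling term together with the block structure across groups, so that cross-group mixing never lowers the threshold below $\max_j\lambda_{\max}(\mathcal{C}_T(j))$. A secondary subtlety is that the whitening basis $\Theta$ depends on $z^*(j)$, hence on the group; this is harmless because the argument is carried out group-by-group (within a group all $z^*(j)$ coincide), but it must be flagged when a single $\Theta$ is written in (\ref{eq: cric_beta2}).
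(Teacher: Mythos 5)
Your proposal is correct and takes essentially the same route as the paper: you decompose the Hessian of (\ref{eq: Hessian}) into the decoupled per-superstate part plus the nonnegative coupling part, kill the coupling part with a balanced perturbation split among coincident superstates (the paper's Lemma \ref{lem: Lemma1}, following Rose), and reduce each block to $I-\tfrac{1}{T}\mathcal{C}_T(j)$ in a $\Lambda_T(j)$-whitened basis of the tangent space, which is precisely the paper's Lemma \ref{lem: Lemma2} where $\Theta=\Upsilon G$ is produced by simultaneous diagonalization (Laub, Theorem 12.19). The only differences are cosmetic: you build the whitening basis directly (exploiting $\Lambda_T(j)=\mathrm{diag}\{1/z^*_{jk}\}$ at the critical point) instead of citing the congruence theorem, and you make explicit the ``no earlier destabilizing direction'' argument that the paper disposes of as the straightforward ``if'' part of Lemma \ref{lem: Lemma1}.
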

\begin{proof}
Please refer to the \ref{app: First}.
\end{proof}

{\em Re-interpreting $T_{cr}$, $T_{cr,j}$, and $\mathcal{C}_T(j)$: } As stated earlier, the annealing temperature $T:=T_{cr}$ (computed in (\ref{eq: cric_beta})-(\ref{eq: cric_beta2})) marks the increase in the number of distinct superstates in the aggregated chain, i.e., it characterizes the phase transitions in the DA-based algorithm \cite{xu2014aggregation}. However, the expressions in (\ref{eq: cric_beta})-(\ref{eq: cric_beta2}) are also further interpretable beyond the current context of phase transitions. For instance, $\mathcal{C}_T(j)$ in (\ref{eq: cric_beta2}) is a {\em soft co-variance matrix} of the posterior distribution $[P]_{ij}$ corresponding to the superstate $y_j\in Y$, $T_{cr,j}$ in (\ref{eq: cric_beta}) is the maximum eigenvalue of $\mathcal{C}_T(j)$, and thus, captures the {\em maximum variance} within the transition probabilities $\{\pi(i)\}$ of the states $\{x_i\}$ represented by the superstate $y_j$. In other words, $T_{cr,j}$ can be interpreted as a {\em measure of heterogeneity} within the states represented by the superstate $y_j$, and $T_{cr}$ can be interpreted as the maximum heterogeneity among all the superstates $\{y_j\in Y\}$.

\begin{figure*}
\centering
\includegraphics[width=0.9\textwidth]{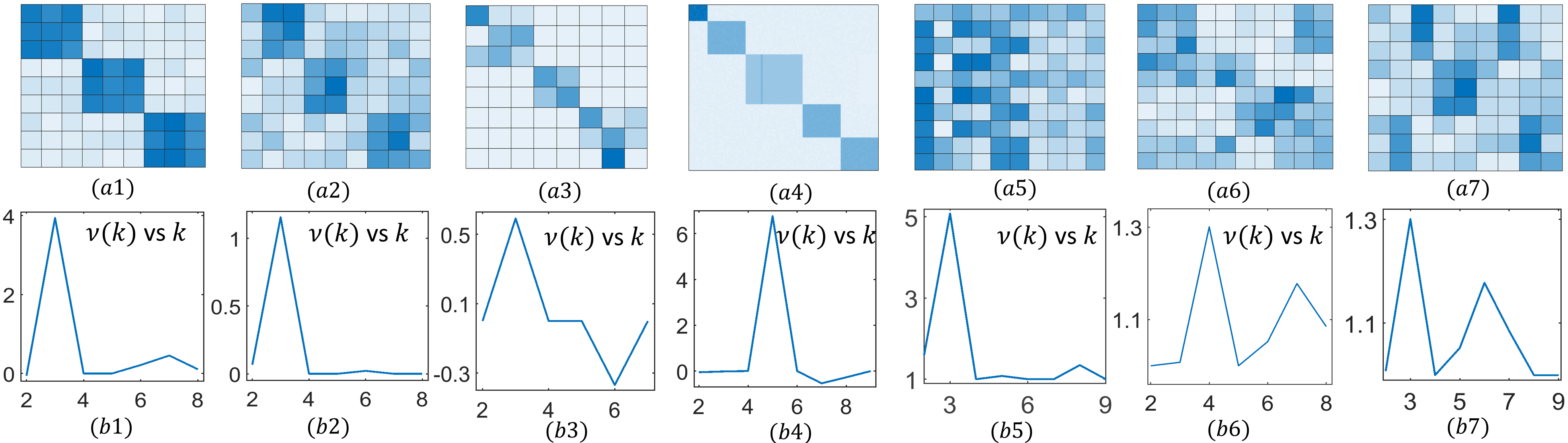}
\caption{Illustrates the efficacy of Algorithm \ref{alg: Algorithm1} in estimating $k_t$. (a1)-(a4) demonstrate the heatmaps for the transition matrices of NCD Markov chains generated such that $N=9$, $k_t=3$ in (a1)-(a2), $N=8$, $k_t=3$ in (a3), and $N=100$, $k_t=5$ in (a4). (b1)-(b4) are the corresponding persistence plots t  hat clearly indicate $\nu(k_t)>\nu(k)$ $\forall$ $k\neq k_t$. (a5)-(a7) illustrate Markov chains generated by considering multiple copies of random vectors $\{\xi_i\in\mathbb{R}^N\}_{i=1}^{k_t}$; where $N=10$, $k_t=3$ in (a5), $N=10$, $k_t=4$ in (a6), and $N=9$, $k_t=3$ in (a7). (b5)-b(7) are the corresponding persistence plot that accurately estimate $k_t$. }\label{fig: Figure3}\vspace{-0.4cm}
\end{figure*}

The soft co-variance matrix  $\mathcal{C}_T(j)$ in (\ref{eq: cric_beta2}) depends on the partition weights $[P]_{ij}$ and distribution vectors $Z=P^{\top}\Pi$ in (\ref{eq: RepDist_Association}) that result from DA-based aggregation algorithm in \cite{xu2014aggregation} $-$ making the above quantification $T_{cr,j}$ of heterogeneity dependent on the algorithm. Thus, in the following section, we adapt $\mathcal{C}_T(j)$ to incorporate aggregated chains irrespective of the algorithm used to determine them. In particular, for a given aggregated representation $(Y,\Psi,\Phi)$ of a Markov chain $(X,\Pi)$, we replace the soft partitions $[P]_{ij}$ in $\mathcal{C}_T(j)$ with the hard-partitions prescribed by the partition function $\Phi:X\rightarrow Y$. Subsequently, we provide the expression for {\em marginal return}, and the Algorithm \ref{alg: Algorithm1} that estimates a choice for the number of superstates to be considered in the aggregated chain.

\section{Covariance Matrix and Marginal Return}\label{sec: MainResults}
Consider a Markov chain $(X,\Pi)$ with state space $X=\{x_i:1\leq i \leq N\}$, and the transition probability matrix $\Pi = (\pi_{ij})\in\mathbb{R}^{N\times N}$, its aggregated representative chain $(Y,\Psi,\Phi)$ with state space $Y=\{y_j:1\leq j\leq M\}$, the transition probability matrix $\Psi=(\psi_{jk})\in\mathbb{R}^{M\times M}$, where $M\ll N$, and the associated partition function $\Phi:X\rightarrow Y$ such that the state $x_i\in X$ is represented by the superstate $y_j:=\Phi(x_i)\in Y$.
\begin{definition}
The {\em co-variance matrix} $C_{X}^{\Phi}(j)$ corresponding to the superstate $y_j\in Y$ in the aggregated chain is given by
\begin{align}
&\text{\small$C_{X}^{\Phi}(j)$} := \text{\small$\mathlarger{\sum}_{i=1}^N [Q]_{ij}\Big[\Theta^{\top} \frac{(\pi(i)-w(j)).}{w(j)}\Big]\Big[\Theta^{\top} \frac{(\pi(i)-w(j)).}{w(j)}\Big]^{\top}$,}\label{eq: Cov_Matrix}\\
&\text{\small where } \text{\small$[Q]_{ij}=
\begin{cases}
1, & \text{if } x_i\in \Phi^{-1}(y_j)\\
0, & \text{otherwise}
\end{cases},\quad Q := [Q]_{ij},$}\label{eq: Cov_Matrix2}
\end{align}
$W=[w(1),\hdots,w(M)]^{\top}=Q^{\top}\Pi$ denotes the distribution vector as obtained in (\ref{eq: Soft_Association}), $\Theta\in\mathbb{R}^{N\times N-1}$ corresponds to the constraint $z(j)^{\top}\mathbf{1}_N=1$ $\forall$ $1\leq j\leq M$ in (\ref{eq: Distortion}), $\frac{a.}{b}$ denotes element-wise division of vectors $a$ and $b$, and $[\cdot]^{\top}$ denotes the transpose.
\end{definition}
\begin{definition}
The {\em marginal return} $\nu(k)$ of the aggregated model with $k$ number of superstates among the given $K$ aggregated representations $\{(Y_k,\Psi_k,\Phi_k):|Y_k|=k\}_{k=1}^K$ of a Markov chain $(X,\Pi)$ is given by
\begin{align}
&\text{\small$\nu(k) := \log \bar{T}_{k-1} - \log \bar{T}_{k}, \text{ where }$}\label{eq: Persistence}\\
&\text{\small$\bar{T}_{l} := \max_{1\leq j \leq l}\big[\bar{T}_{l,j}\big]\text{ for } l\in\{k-1,k\}, \bar{T}_{l,j}=\lambda_{\max}(C_{X}^{\Phi_l}(y_j))$}\label{eq: Persistence2}
\end{align}
denotes the heterogeneity of superstate $y_j$, and $\lambda_{\max}(\cdot)$ denotes the largest eigenvalue.
\end{definition}

As illustrated in the Section \ref{sec: Introduction}, the aggregated model with largest marginal return estimates the true number $k_t$ of superstate underlying a Markov chain, i.e.
\begin{align}
k_t:=\arg\max_{1\leq k\leq K}\nu(k),
\end{align}

The following algorithm computes the marginal return of the $K$ aggregated representations $\{(Y_k,\Psi_k,\Phi_k):|Y_k|=k\}_{k=1}^K$ of a Markov chain $(X,\Pi)$, and estimates the corresponding true number $k_t$ of superstates.

\begin{algorithm}
\textbf{Input: $(X,\Pi)$,  $\{(Y_k,\Psi_k,\Phi_k):|Y_k|=k\}_{k=1}^K$};  
\textbf{Output: }{$\nu(k)$, $k_t$}\\
\For {$k=1$ to $K$}{
From partition $\Phi_k:X\rightarrow Y$ determine $Q$ in (\ref{eq: Cov_Matrix2})\\
Compute $W:=[\cdots w(j)\cdots]^{\top}=Q^{\top}\Pi$ and $\bar{T}_k$ using (\ref{eq: Persistence2}). 
}
compute $\nu(k)$ in (\ref{eq: Persistence}) $\forall$ $2\leq k\leq K$, and $k_t:=\arg\max_k \nu(k)$.
\caption{Marginal return and Number of superstates}\label{alg: Algorithm1}
\end{algorithm}\vspace{-0.4cm}
\section{Simulations}\label{sec: Simulations}
In this section we demonstrate the efficacy of marginal return in comparing different aggregated models, and estimating true number $k_t$ of superstates underlying the given Markov chain. We use the Algorithm \ref{alg: Algorithm1} that takes in the aggregated models at different number $k$ of superstates as inputs, and outputs the corresponding marginal return and estimated $k_t$. To demonstrate the generality of our proposed method, we use aggregated models obtained from different aggregation algorithms. We use the algorithm presented in \cite{geiger2014markov} on the first four example simulations in Figure \ref{fig: Figure3}, and \cite{xu2014aggregation} on the remaining examples in Figures \ref{fig: Figure3} and \ref{fig: Fig4}. 

{\em NCD Markov Chains:} The transition matrix $\Pi$ for the NCD Markov chains \cite{ando1963near} presumes the following structure $\Pi = \Pi^* + \epsilon C$, where $\Pi^*$ is a block diagonal matrix and $C$ adds a perturbation of the range $\epsilon$. Naturally, the number $k_t$ of superstates for such Markov chains can be approximated to be the number of block diagonals in $\Pi^*$. Figures \ref{fig: Figure3}(a1)-a(2) illustrate the heatmaps of the transition probability matrix for two such Markov chains obtained at different levels of perturbation to a three block diagonal matrix $\Pi^*$. Figures \ref{fig: Figure3}(b1)-(b2) illustrate the corresponding persistence $\nu(k)$ versus $k$ plots which correctly identifies the true number $k_t=3$ of superstates in both cases. Note that, even though the transition matrix $\Pi$ in Figure \ref{fig: Figure3}(a2) is highly perturbed, our method accurately identifies number of superstates thereby, demonstrating the robustness of the marginal return $\nu(k)$. 

Figure \ref{fig: Figure3}(a3) illustrates a specialized NCD Markov chain with Courtois Transition Matrices. These Markov chains are well studied in literature for their slow convergence to the steady state \cite{elsayad2002numerical}. As is evident from the above heatmap in Figure \ref{fig: Figure3}(a3), the transition matrix comprises of $3$ dominant block diagonal. Thus, the corresponding Markov chain comprises of $k_t=3$ number of superstates. This is captured in our marginal return analysis in the Figure \ref{fig: Figure3}(b3). Figure \ref{fig: Figure3}(a4) demonstrates the state transition matrix of a large ($N=100$ states) NCD Markov chain, considered in \cite{xu2015clustering}, constituting $5$ underlying block diagonals (one each of size $10\times 10$ and $30\times 30$, and three of size $20\times 20$). Figure \ref{fig: Figure3}(b4) confirms largest marginal return at $k=5$ number of superstates, and thus, appropriately estimates $k_t$.
{\em Randomly generated Markov Chains:} In the following simulations we consider $N$ state Markov chains where the transition matrices $\Pi=[\xi_{i_1},\hdots,\xi_{i_N}]^{\top}+\epsilon C$ are generated from $k_t$ random distribution vectors  $\{\xi_i\}_{i=1}^{k_t}$, and $\epsilon C$ introduces random perturbations. Naturally, $k_t$ estimates true number of superstates underlying the above Markov chains. Figure \ref{fig: Figure3}(a5) illustrates one such scenario where $\Pi\in\mathbb{R}^{10\times 10}$ is generated by perturbing multiple copies of $\{\xi_i\}_{i=1}^3$, and the marginal return analysis in Figure \ref{fig: Figure3}(b5) accurately determines $k_t=3$ as the estimate for the true number of superstates. Similarly, Figures \ref{fig: Figure3}(a6) and \ref{fig: Figure3}(a7) illustrate the transition matrices $\Pi\in\mathbb{R}^{10\times 10}$ generated from $4$ and $3$ random distribution vectors, respectively, and the marginal return plots in Figure \ref{fig: Figure3}(b6) and \ref{fig: Figure3}(b7) accurately estimate the underlying true number of superstates. Note that even though the perturbations are significantly large in Figures \ref{fig: Figure3}(a5)-(a7), marginal return $\nu(k)$ accurately estimates the number $k_t$ of superstates underlying the original Markov chains.

{\em Emotion Transitions in Brain Network: } We consider the example Markov chain that models the transition between different emotional states of a person \cite{thornton2017mental}. The Figure \ref{fig: Fig4}(a1) illustrates the transition matrix $\Pi$ comprising $N=22$ states - each corresponding to an emotion in the set {\em$X=\{$anxious, jittery, irritable, vigorous, alert, lively	happy, attentive, intense, full-of-pep, excited, distressed,	strong, stirred-up, nervous, upset, touchy, bold, temperamental, quiet, talkative, insecure$\}$}. We obtain different aggregated representations of $\Pi$ using the algorithm presented in \cite{xu2014aggregation}, and observe that the marginal return $\nu(k)$ is the largest for $k_t=2$ (see Figure \ref{fig: Fig4}(b1)). This indicates the presence of two superstates underlying the original Markov chain in Figure \ref{fig: Fig4}(a1). We note that the largest marginal return $\nu(k)$ at $k_t=2$ is in accordance with the fact that the emotions in $X=X_p\sqcup X_n$ can be classified into positive $X_p$ and negative $X_n$ emotions, where $X_p=\{$vigorous, alert, lively, happy, attentive, intense, full-of-pep, excited, strong, bold, quiet$\}$, and $X_n=\{$anxious, jittery, irritable, vigorous, distressed, stirred-up, nervous, upset, touchy, temperamental, insecure, talkative$\}$. Thus, our notion of marginal return captures the number of distinct type of emotions in $X$. Similar analysis can be done on different brain network models \cite{vidaurre2017brain, vidaurre2018spontaneous} to qualify, for instance, the network states responsible for either {\em sensory-motor systems}, or for {\em higher-order cognition} (such as language).

\begin{figure}
\centering
\includegraphics[width=0.65\columnwidth]{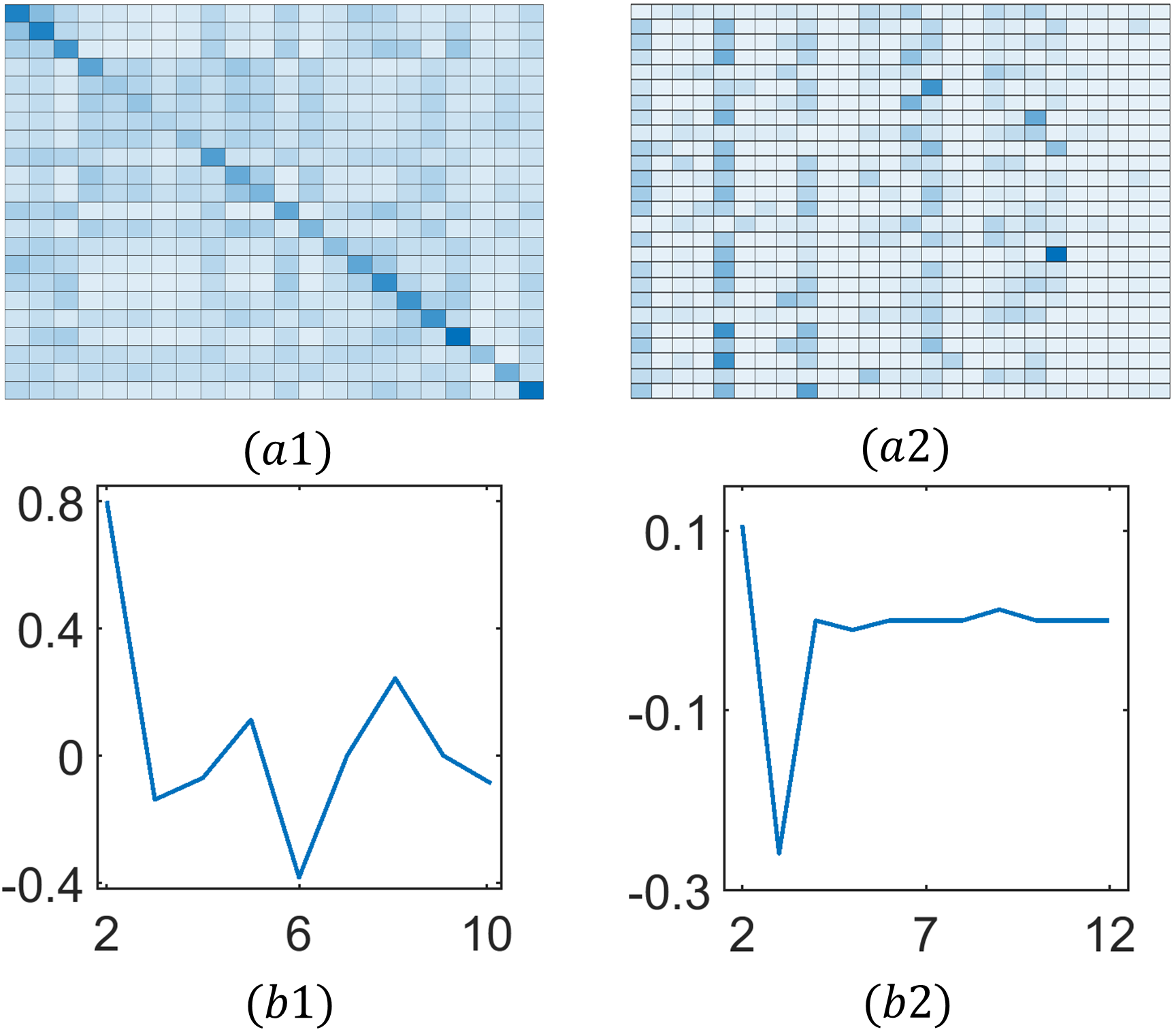}
\caption{Marginal return analysis in realistic Markov chains. (a1) Transition matrix for brain emotion transitions where each state corresponds to an emotion. (a2) Transition matrix for letter bigram dataset. Each state correspond to an English alphabet. (b1) Maximum $\nu(k)$ observed at $k_t=2$ suggesting two types of underlying emotions. (b2) Largest $\nu(k)$ at $k_t=2$ indicating two types of alphabets in English language - vowels and consonants.}\label{fig: Fig4}
\end{figure}
\begin{table}
\centering
\begin{tabular}{|c|c|}
\hline
{\small$k$} & {\small State Partitioning}\\
\hline
{\small$2$} &  {\small$\{aeiouy\},\{bcdfghjklmnpqrstvwxz\}$}\\
{\small$3$} & {\small$\{aeiouy\},\{cdghknrstvwxz\},\{bfjlmpq\}$}\\
{\small$4$} & {\small$\{aeiouy\},\{hnrsvxz\},\{cdgkwt\},\{bfjlmpq\}$}\\
{\small$5$} & {\small$\{aeiouy\},\{bflmp\},\{cdgktw\},\{hnrlvxz\},\{jq\}$}\\\hline
\end{tabular}
\caption{State aggregations of letter bi-gram data obtained at different number of superstates. Obtained via aggregation algorithm in \cite{xu2014aggregation}.}\label{Tab: Agg}
\vspace{-0.4cm}
\end{table}
{\em Letter Bigram dataset:} We finally investigate an example from Natural Language Processing. We consider the letter bi-gram dataset \cite{norvig2013english} that enumerates the number of transitions from one alphabet to another based on the Google Corpus Data (collection of $97,565$ distinct words, which were mentioned $743,842,922,321$ times). We represent the entire dataset as a Markov chain where each state represents an individual alphabet. The transition matrix $\Pi\in\mathbb{R}^{26\times 26}$ (see Figure \ref{fig: Figure3}(a9)) of the Markov chain captures the frequency of transition from one alphabet (state) to another. Our idea is to demonstrate that {\em marginal return} captures a ``meaningful" number of superstates to be considered in the aggregated representation of the above Markov chain. Table \ref{Tab: Agg} illustrates aggregated models obtained at different number of superstates. Since the alphabets are of two types, i.e., either vowels or consonants, one may find it reasonable to aggregate the original Markov chain into two superstates - one each for vowels and consonants. The above intuitive argument is reinforced by the largest value of marginal return $\nu(k)$ that is obtained at $k_t=2$ (see Figure \ref{fig: Figure3}(b9)). Note that the corresponding aggregated model at $k_t=2$ (as illustrated in Table \ref{Tab: Agg}) partitions the alphabets into either vowel sounds $\{aeiouy\}$, or the consonants (where $y$ is a (semi)-vowel).

\section{Conclusion}\label{sec: AnalysisDisc}
Model-reduction techniques usually rely on the pre-specified size of the reduced model. Thus, there arises a need to methodically estimate this size. In the context of Markov chain aggregation, we exploit the phase transitions in \cite{xu2014aggregation}, and devise the notions of heterogeneity and marginal return. We demonstrate that the proposed notions are insightful in comparing aggregated models of different sizes, and estimating the number of superstates underlying the original Markov chain. The ideas and methods presented in this paper can be extended to several related combinatorial model-reduction problems. For instance, the graph clustering problem \cite{xu2015clustering}, that requires aggregating the nodes of a large graph into {\em super-nodes} and determining the edge-connections between them, or the co-clustering problem \cite{dhillon2003information}, that aggregates a given matrix $\mathcal{X}\in\mathbb{R}^{N_1\times N_2}$ to a smaller representative matrix $\mathcal{Y}\in\mathbb{R}^{M_1\times M_2}$, pose the similar optimization problem as in (\ref{eq: Distortion}). The appropriate size of the representative graph in the former, or the size of the aggregated matrix $Y$ in the latter can be estimated using similar ideas as elucidated in our work.
\section*{Acknowledgments}
The authors would like to acknowledge Bernhard Geiger, Rana Ali Amjad, Clemens Bloechl for sharing their code for the aggregation algorithm proposed in \cite{amjad2019generalized}. 
\appendix
\section{}\label{app: First}
\noindent Perturbation $\Psi=[\psi_1,\hdots,\psi_M]\in \mathbb{R}^{M\times N}$. Let $\psi_j=\Phi K_j$, where $\Phi\in \mathbb{R}^{N\times N}$ and $\Phi \mathbf{1}_N=0$ for admissible perturbation of $z(j)$ (i.e. $(z(j)+\epsilon \psi_j)\mathbf{1}_N=1$ in (\ref{eq: Distortion})).
\begin{lemma}\label{lem: Lemma1}
The Hessian {\small$\mathcal{H}(Z^*,P^*,\Psi,T)$} in (\ref{eq: Hessian}) loses rank when $\text{det}\big[\Phi^{\top}\big(\Lambda_T(j)-\frac{1}{T}C_T(j)\big)\Phi\big]=0$ for some $1\leq j\leq M$, where $\Lambda_T(j)$ and $C_T(j)$ are as defined in (\ref{eq: imp_matrices}).
\end{lemma}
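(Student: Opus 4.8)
The plan is to reduce the loss of positive definiteness of the Hessian in (\ref{eq: Hessian}) to a per-superstate condition by rewriting it in the free perturbation coordinates and then exploiting that one of its two pieces is positive semidefinite. First I would substitute the admissible perturbation $\psi_j = \Phi K_j$, where the columns of $\Phi$ span the tangent space $\mathbf{1}_N^\perp$ of the simplex constraint (so that $(z(j)+\epsilon\psi_j)^\top\mathbf{1}_N = 1$ is preserved for every $K_j$), into (\ref{eq: Hessian}). The first sum then becomes the block-diagonal quadratic form $\mathrm{A}(K) := \sum_{j=1}^M q_j\, K_j^\top\, \Phi^\top\big(\Lambda_T(j) - \tfrac1T C_T(j)\big)\Phi\, K_j$ in the stacked variable $K = (K_1,\dots,K_M)$, while the second sum becomes $\mathrm{B}(K) := T\sum_{i=1}^N\big(\sum_{j=1}^M p_{j|i}\big[\Phi^\top\tfrac{\pi(i).}{z^*(j)}\big]^\top K_j\big)^2$, which is manifestly a sum of squares and hence positive semidefinite. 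Thus, as a quadratic form over admissible perturbations, $\mathcal{H} = \mathrm{A} + \mathrm{B}$ with $\mathrm{B}\succeq 0$, where throughout the relevant determinant and ``rank'' are read as those of the operator restricted to the admissible subspace $\mathbf{1}_N^\perp$.

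Next I would use this decomposition to locate the threshold. Since $\mathrm{B}\succeq 0$, the form $\mathcal{H}$ is strictly positive on every nonzero admissible direction whenever $\mathrm{A}$ is, and $\mathrm{A}$ is positive definite precisely when each block $M_j := \Phi^\top\big(\Lambda_T(j) - \tfrac1T C_T(j)\big)\Phi$ is positive definite on the tangent space. Hence $\mathcal{H}$ can fail to be positive definite only once some $M_j$ ceases to be; and as $T$ decreases from large values $-$ where $\Lambda_T(j)$ dominates and every $M_j\succ 0$, while $\tfrac1T C_T(j)\to 0$ $-$ the first such event is a block first becoming singular, i.e. $\det\big[\Phi^\top(\Lambda_T(j)-\tfrac1T C_T(j))\Phi\big]=0$ for some $j$. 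This establishes the stated determinant condition as the necessary threshold, and also sets up the connection to the Theorem, since at that point $\tfrac1T$ is the largest generalized eigenvalue of the pair $\big(\Phi^\top C_T(j)\Phi,\,\Phi^\top\Lambda_T(j)\Phi\big)$.

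The remaining and hardest step is sufficiency: showing that at this threshold $\mathcal{H}$ genuinely acquires a null direction rather than merely losing its margin in $\mathrm{A}$. The obstacle is the coupling term $\mathrm{B}$, because taking $K_{j_0}$ equal to the null vector $K^*$ of the singular block $M_{j_0}$ and all other $K_j=0$ makes $\mathrm{A}=0$ but in general leaves $\mathrm{B}>0$. To annihilate $\mathrm{B}$ as well I would invoke the phase-transition structure of Section~\ref{sec: Agg_MarkovChain}: at criticality a whole group of \emph{coincident} superstates, sharing identical $z^*(j)$, $p_{j|i}$, and hence identical $\Lambda_T(j)$ and $C_T(j)$, reaches the determinant condition simultaneously, so the null direction $K^*$ can be distributed over this group with balanced signs chosen so that $\sum_{j} p_{j|i}\big[\Phi^\top\tfrac{\pi(i).}{z^*(j)}\big]^\top K_j = 0$ for every $i$. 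For such a balanced perturbation $\mathrm{B}=0$ while $\mathrm{A}$ remains $0$, giving $\mathcal{H}=0$ on a nonzero admissible direction, so the Hessian indeed loses rank. Verifying that the coincidence of the $z^*(j)$ makes this balanced choice cancel $\mathrm{B}$ exactly is the crux of the argument; the block-diagonal reduction of $\mathrm{A}$ and the semidefiniteness of $\mathrm{B}$ are the routine parts.
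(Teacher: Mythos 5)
Your proposal is correct and follows essentially the same route as the paper's proof: you split the Hessian into the block-diagonal first term plus the sum-of-squares second term (giving the ``if'' direction for free), and for the converse you use exactly the paper's device of a balanced perturbation supported on a group of coincident superstates at criticality---$\psi_j=0$ off the group and $\sum_j \psi_j = 0$ on it---which annihilates the coupling term because the coincident $z^*(j)$ yield identical $p_{j|i}$ and $\Lambda_T(j)$, $C_T(j)$. The paper states this more tersely, but the decomposition, the necessity argument, and the null-direction construction are the same.
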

\begin{proof}
Motivated from \cite{rose1998deterministic}. $\mathcal{H}(Z^*,P^*,\Psi,T)$ is positive for all perturbation $\Psi$ if and only if its first part (see (\ref{eq: Hessian})) is positive. `If' part is straightforward. For the `only if' part we show that $\exists$ a $\Psi$ such that second term in $\mathcal{H}$ vanishes. Let $J=\{j_1,\hdots,j_r\}$ denote the superstates with distribution $z_{j_0}$. Select $\Psi$ such that $\psi_j=0~\forall~j\notin J$ and $\sum_{j\in J}\psi_j=0$. For this perturbation the second term vanishes and $\mathcal{H}$ depends only on its first term. Thus, we establish the `only if' part. Replacing the $\psi_j$ in (\ref{eq: Hessian}) with $\Phi K_j$ we obtain the condition stated in the Lemma.
\end{proof}
\begin{lemma}\label{lem: Lemma2}
The condition {$\text{det}\big[\Phi^{\top}\big(\Lambda_T(j)-\frac{1}{T}C_T(j)\big)\Phi\big]=0$} for some {$1\leq j\leq M$} is attained at temperature value $T=T_{cr}:=\lambda_{\max}\big(\mathcal{C}_T(j)\big)$, where $\mathcal{C}_T(j)$ is given in (\ref{eq: cric_beta2}).
\end{lemma}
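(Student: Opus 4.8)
The plan is to start from the determinant condition furnished by Lemma~\ref{lem: Lemma1} and reduce it, on the constraint subspace, to a symmetric eigenvalue problem for $\mathcal{C}_T(j)$. The first step is to make the matrix $\Lambda_T(j)-\tfrac{1}{T}C_T(j)$ explicit. Writing $D_j:=\mathrm{diag}(z^*(j))$ and using $z^*(j)=\sum_i p_{i|j}\pi(i)$ (which follows from $Z=P^{\top}\Pi$ in (\ref{eq: RepDist_Association})), the $k$-th diagonal entry of $\Lambda_T(j)$ in (\ref{eq: imp_matrices}) collapses from $\big(\sum_i p_{i|j}\pi_{ik}\big)/(z^*_{jk})^2$ to $1/z^*_{jk}$, so that $\Lambda_T(j)=D_j^{-1}$. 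The same normalization lets me factor $C_T(j)=D_j^{-1}\Sigma_j D_j^{-1}$, where $\Sigma_j:=\sum_i p_{i|j}\,(\pi(i)-z^*(j))(\pi(i)-z^*(j))^{\top}$ is the (unscaled) covariance of the transition vectors assigned to $y_j$. Substituting these into the bracketed matrix and conjugating by the constraint basis $\Theta$ gives $\Theta^{\top}\big(\Lambda_T(j)-\tfrac{1}{T}C_T(j)\big)\Theta=A_j-\tfrac{1}{T}\mathcal{C}_T(j)$, where $A_j:=\Theta^{\top}\Lambda_T(j)\Theta$ is symmetric positive definite and $\mathcal{C}_T(j)=\Theta^{\top}D_j^{-1}\Sigma_j D_j^{-1}\Theta$ is exactly the soft covariance of (\ref{eq: cric_beta2}).

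With this identity, the singularity condition of Lemma~\ref{lem: Lemma1} becomes $\det\big[A_j-\tfrac{1}{T}\mathcal{C}_T(j)\big]=0$, which I read as the generalized eigenvalue problem $\mathcal{C}_T(j)v=T\,A_j v$. To identify which root is the critical temperature, I would argue monotonically in $T$: at large $T$ the bracketed matrix is dominated by the positive definite term $A_j$, so the first part of the Hessian (\ref{eq: Hessian}) stays positive; as $T$ decreases the matrix first loses a positive eigenvalue exactly when $T$ reaches the \emph{largest} generalized eigenvalue of the pair $(\mathcal{C}_T(j),A_j)$. When $\Theta$ is taken as an orthonormal basis of the constraint subspace normalized so that the mass term $A_j=\Theta^{\top}\Lambda_T(j)\Theta$ reduces to the identity, this largest root is precisely $\lambda_{\max}(\mathcal{C}_T(j))$, which is the claimed $T_{cr,j}$. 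Finally, since the Hessian loses positivity as soon as \emph{any} superstate block $j$ becomes singular, and the earliest such event as $T$ decreases corresponds to the largest $T_{cr,j}$, taking $T_{cr}=\max_{1\le j\le M}T_{cr,j}$ recovers (\ref{eq: cric_beta}).

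The main obstacle will be the careful handling of the simplex constraint through $\Theta$. On the full $\mathbb{R}^N$ the relevant matrices are rank-deficient: because $\pi(i)$ and $z^*(j)$ are both probability vectors, every difference $\pi(i)-z^*(j)$ lies in $\mathbf{1}_N^{\perp}$, so $\Sigma_j\mathbf{1}_N=0$ and the unprojected determinant vanishes identically, carrying no information. The conjugation by $\Theta$ (equivalently, the substitution $\psi_j=\Phi K_j$ of Lemma~\ref{lem: Lemma1}, restricting perturbations to the admissible directions with $\psi_j^{\top}\mathbf{1}_N=0$) is what renders $A_j$ invertible and makes the eigenvalue count meaningful. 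The delicate point, and the step I expect to need the most care, is justifying that the mass term $A_j$ normalizes to the identity so that the generalized eigenproblem collapses to the ordinary eigenvalues of $\mathcal{C}_T(j)$; this is where the choice of $\Theta$ and the whitening by $D_j^{-1/2}$ must be reconciled, rather than in the otherwise routine determinant manipulations.
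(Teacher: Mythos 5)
Your proposal is correct and follows essentially the same route as the paper's proof: both restrict the determinant condition of Lemma~\ref{lem: Lemma1} to a full-column-rank basis of the constraint subspace, read the result as a generalized eigenvalue problem for the pencil $\big(\Theta^{\top}\Lambda_T(j)\Theta,\ \mathcal{C}_T(j)\big)$, and collapse it to an ordinary eigenvalue problem by choosing the basis to whiten the $\Lambda_T(j)$ term. The "delicate step" you flag is resolved in the paper exactly as you anticipate: it invokes simultaneous congruence (Theorem 12.19 of \cite{laub2005matrix}) to construct $G$ with $G^{\top}\Upsilon^{\top}\Lambda_T(j)\Upsilon G = I$ and then \emph{defines} $\Theta := \Upsilon G$, which is precisely the $\Lambda_T(j)$-orthonormal choice of $\Theta$ (equivalently, your $D_j^{-1/2}$ whitening, using your observation $\Lambda_T(j)=D_j^{-1}$) that your argument requires.
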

\begin{proof}
Rank of $\Phi = N-1$. Let $\Upsilon\in\mathbb{R}^{N\times N-1}$ such that Range($\Upsilon$) = Range($\Phi$). Let $W_j\in\mathbb{R}^{N-1}$ such that $\Upsilon W_j = \Phi K_j$. From Lemma \ref{lem: Lemma1} $\exists$ $W_j$ $:$ $W_j\Upsilon^{\top}\big(\Lambda_T(j)-\frac{1}{T}C_T(j)\big)\Upsilon W_j=0$. Let $H_0=\Upsilon^{\top}\Lambda_T(j)\Upsilon$, $H_1=\Upsilon^{\top} C_T(j)\Upsilon$. From Theorem 12.19 in \cite{laub2005matrix} $\exists$ $G\in\mathbb{R}^{N\times N-1}$ s.t. $G^{\top}H_0G=I$ and $G^{\top}H_1G=\mathcal{C}_T(j)$ where $G=L^{-t}P$, $LL^{\top}=H_0$, $P^{\top}[L^{-1}H_1L^-t]P=\mathcal{C}_T(j)$. Let $W_j=G\omega_j$, $\omega\in\mathbb{R}^{N-1}$, $\Rightarrow$ $w_j^{\top}(G^{\top}H_0G-\frac{1}{T}G^{\top}H_1G)\omega_j=0$ $\Rightarrow$ $\text{det}\big(I-\frac{1}{T}\mathcal{C}_T(j)\big)=0$ $\Rightarrow T_{cr} = \lambda_{\max}\big(\mathcal{C}_T(j)\big)$. Since $G^{\top}H_1G=\mathcal{C}_T(j)$ we have that $\Theta = \Upsilon G$.
\end{proof}

\bibliographystyle{IEEEtran}
\bibliography{IEEEabrv}

\end{document}